\newtheorem{theorem}{Theorem}
\newenvironment{proof}[1]{\medskip\par\noindent{\bf
Proof:\,}\,#1}{{\mbox{\,$\blacksquare$}\par}}
\newtheorem{corollary}{Corollary}
\begin{document}

        \title{Age of Information in G/G/1/1 Systems}

        \author[1]{Alkan Soysal}
        \author[2]{Sennur Ulukus}
        \affil[1]{\normalsize Department of Electrical and Electronics Engineering,
Bahcesehir University, Istanbul, Turkey}
        \affil[2]{\normalsize Department of Electrical and Computer Engineering,
University of Maryland, MD}

        \maketitle

\begin{abstract}
We consider a single server communication setting where the interarrival times of data updates at the source node and the service times to the destination node are arbitrarily distributed. We consider two service discipline models. If a new update arrives when the service is busy, it is dropped in the first model; and it preempts the current update in the second model. For both models, we derive exact expressions for the age of information metric with no restriction on the distributions of interarrival and service times. In addition, we derive upper bounds that are easier to calculate than the exact expressions. In the case with dropping, we also derive a second upper bound by utilizing stochastic ordering if the interarrival times have decreasing mean residual life (DMRL) and service times have new better than use in expectation (NBUE) property. 
\end{abstract}

\section{Introduction}
No matter how important information might be, there is a duration of time after which information loses its freshness. Especially in today's world of immensely interactive everything, information ages fast. Hence, in recent years, researchers started to consider the age of information (AoI) as well as the value of information. Age of anything can be defined as the duration between the time of birth and now. This definition is sufficiently broad to cover all communication scenarios. Nevertheless, most of AoI literature considers queuing systems with well-behaved distributions. In this paper, we take this a step forward and apply AoI viewpoint to more general communication scenarios. 

In a communication setting, the first papers that consider AoI are\cite{YatesInfocom2012}, \cite{YatesCISS2012}, and \cite{YatesISIT2012}. The authors assume First Come First Served (FCFS) systems and calculate the average AoI expressions for M/M/1, M/D/1 and D/M/1 queues in \cite{YatesInfocom2012}, assume Last Come First Served (LCFS) systems with and without preemption and calculate the average AoI expression for M/M/1 queue in \cite{YatesCISS2012}, assume multi-source FCFS systems with M/M/1 queue in \cite{YatesISIT2012}, and provide a more detailed analysis in \cite{YatesKaul-ArXiv}. Starting with these works, there has been a growing interest in AoI analysis. For example, the authors in \cite{CostaEphr-TransIT} consider a packet management approach for M/M/1/1 and M/M/1/2 queues. \cite{YatesISIT2017} calculates age for an M/G/1/1 queue and finds the optimum arrival rate for minimum age. 

While the literature on calculating age expressions for different queuing models expands, another line of research applies AoI approach to energy harvesting problems. The goal is to find the optimum update generation policy that minimizes age, given the service time distribution. In \cite{SunTransIT2017}, the authors show the existence of optimal stationary deterministic update generation policy when the service time process is a stationary and ergodic Markov chain. Application of AoI to offline energy harvesting is considered in \cite{UysalITA2015}, \cite{ArafaUlukusGC17}, \cite{ArafaUlukusAsi17}, and online energy harvesting is considered in \cite{WuYangWu18}, \cite{ArafaUlukusITA18}, \cite{ArafaUlukusICC18}, \cite{BakninaUlukusISIT18}. 

In this paper, our goal is to analyze AoI for general communication scenarios with arbitrary service time and  interarrival time distributions. Using queuing theory terminology, our model corresponds to a G/G/1/1 system. An example of such a G/G/1/1 system is the multicast problem in \cite{Yates-multicast}, where a new update is generated when a percentage of the destinations receives the current update, and service time to each destination is a shifted exponential random variable. Although an exact expression for their model is derived in \cite{Yates-multicast}, in general, calculating an exact age expression for non-exponential interarrival times is difficult to obtain. For example, \cite{Buyukates18} considers a two-stage multicast extension of \cite{Yates-multicast}, where an upper bound is derived for the age of the second stage nodes. In this paper, we derive exact age expressions given that the distribution of service time is arbitrary but known. In addition, we also derive upper bounds to age of information that might be more practical to utilize. If one designs age-minimizing policies using our upper bounds, resulting age will be an achievable age. 

We consider two service disciplines. The first one is called G/G/1/1 {\em with dropping}, where a new arrival is dropped if the server is busy. This model is also used in \cite{CostaEphr-TransIT} for an M/M/1/1 system and in \cite{YatesISIT2017} for an M/G/1/1 system. Here, we restrict ourselves to neither exponential interarrival times nor exponential service times. We derive an exact expression and two upper bounds for our first service model. While the first upper bound does not have any restrictions, the second upper bound requires the interarrival times to have decreasing mean residual life (DMRL) and service times to have new better than used in expectation (NBUE) property \cite{book/StochasticOrders}. Many distributions, including exponential, Rayleigh, Erlang, and gamma, that appear in arrival processes, have both DMRL and NBUE properties \cite{book/MO-life}. 

Our second service discipline model is called G/G/1/1 {\em with preemption in service}, where a new arrival preempts any ongoing service. This model is used in \cite{YatesInfocom2012} and \cite{YatesKaul-ArXiv}, but for exponential interarrival and service times. The exact expression and upper bound that we derive for this model does not have any restrictions. As numerical examples, we simulate our system models with arbitrary distributions, and compare the simulated age values to calculated exact age and upper bound expressions. We observe that for most of the parameter range, upper bounds are close to the exact age values. 

\section{System Model}

We consider a communication scenario where the data arrive at the source according to an arrival process with independent and identically distributed (i.i.d.) interarrival times $Y_n$. The source transmits the data through a single communication link (server). Time duration of service is modeled as a random process with i.i.d. service times $S_n$. We specify general probability distributions on the interarrivals and service times. 

In Figs.~\ref{fig/geometry-dropping} and \ref{fig/geometry-preemption}, circles correspond to packet arrivals at the source. The interval where the system is idle (no packets in the service or in the queue) is denoted by $W_n$, the service time is denoted by $S_n$, and the interval from the end of one idle interval to the end of the next idle interval is denoted by $G_n$. We will consider two service/queue disciplines. 

\subsection{Dropping}
\label{sec/Model-dropping}
\begin{figure}[t]
     \centering
          \includegraphics[width=.4\textwidth]{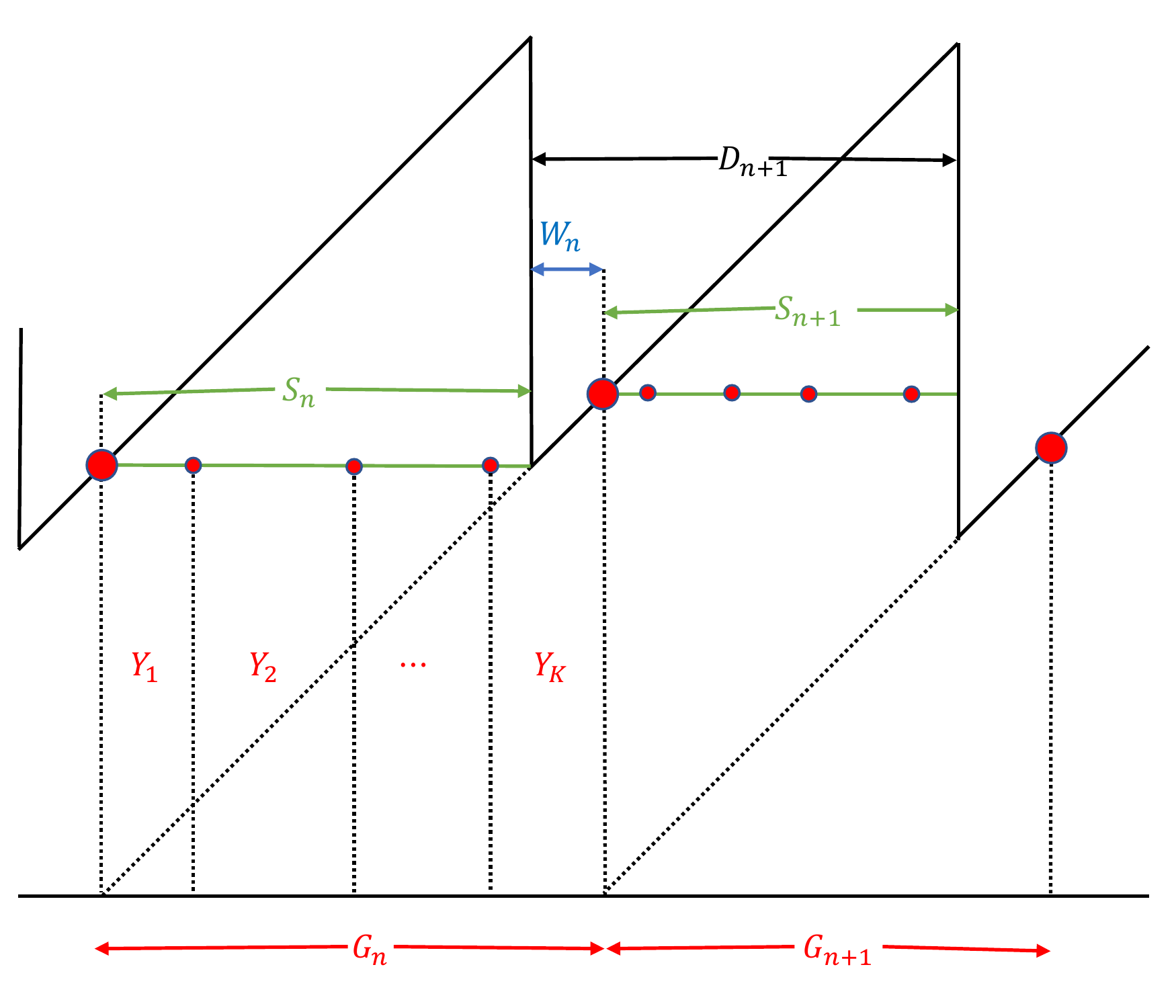}
     \caption{Age curves for G/G/1/1 with dropping model.}
     \label{fig/geometry-dropping}
\end{figure}

In this model, if an update arrives while the server is busy, it is dropped. If an update arrives while the server is idle, service is initiated immediately. We refer to those arrivals that initiate a service as the successful arrivals. In Fig.~\ref{fig/geometry-dropping}, successful arrivals for this model are shown with larger circles, while unsuccessful arrivals are shown with smaller circles. Interarrival times between successful arrivals, ${G}_n$, are called effective interarrival times. In this model, system busy time is equal to the service time, $S_n$. After a service is completed, there is a random nonnegative waiting time for a new successful update, which is $W_n = {G}_n -S_n$.

Note that, effective interarrival time, $G_n$, can be written as a random sum of random numbers, $G_n = \sum_{k=1}^K Y_k$, where $K$ is a discrete random variable with 
\begin{align}
\text{Pr}(K=k) = \text{Pr}\left( \sum_{j=1}^{k-1} Y_j \leq S < \sum_{j=1}^k Y_j \right).
\end{align}

\subsection{Preemption in service}
\begin{figure}[t]
     \centering
          \includegraphics[width=.4\textwidth]{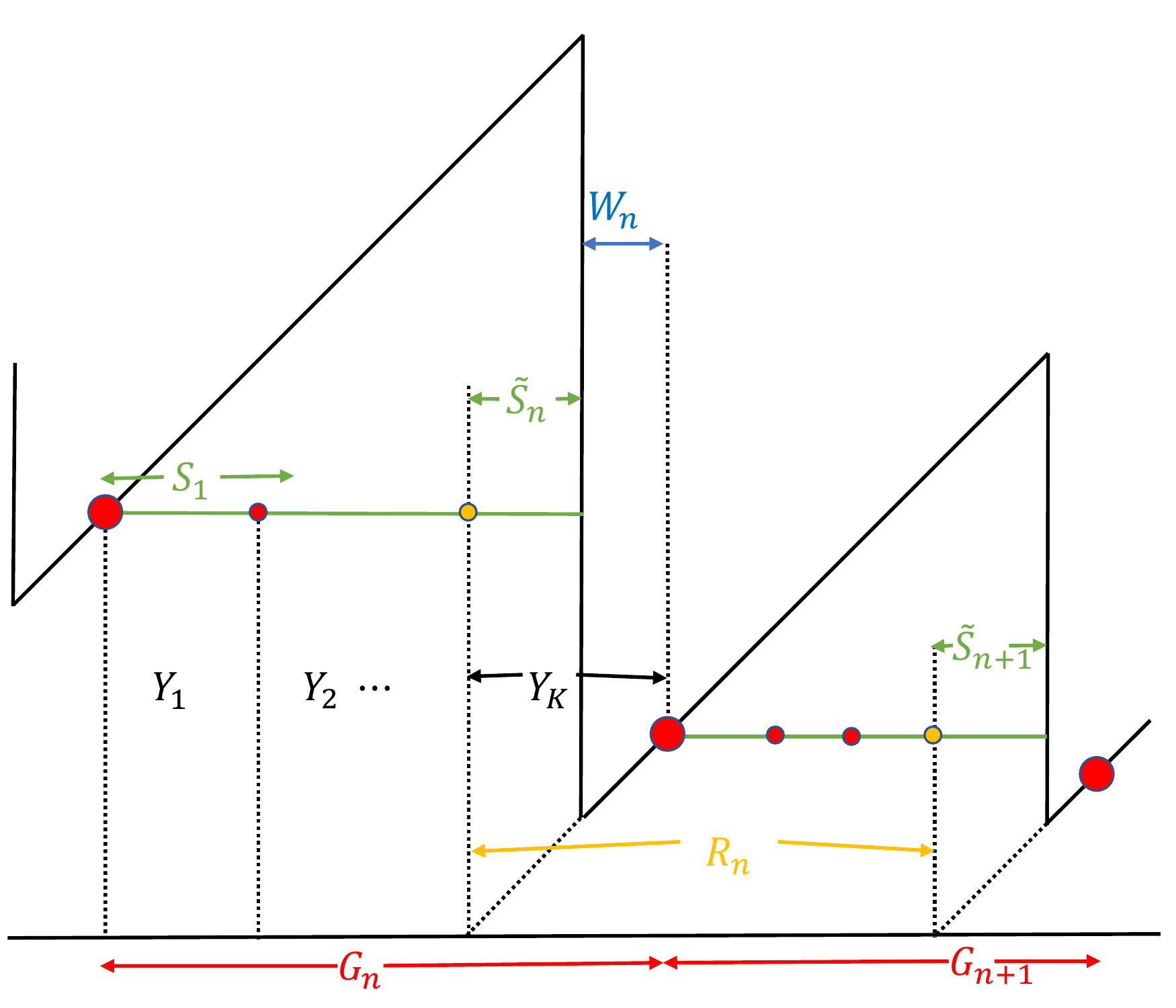}
     \caption{Age curves for G/G/1/1 with preemption in service.}
     \label{fig/geometry-preemption}
\end{figure}

In this model, if an update arrives while the server is idle, the service is initiated immediately. If an update arrives while the server is busy, the packet being served is terminated and the new update is pushed to the server. A successful arrival is the one that can finish the service. In Fig.~\ref{fig/geometry-preemption}, successful arrivals for this model are shown with yellow circles, while unsuccessful arrivals are shown with red circles. Similar to the dropping model, we call the interarrival times between successful arrivals as the effective interarrival times. Note that effective interarrival times, $R_n$, are statistically equal to ${G}_n$. On the other hand, in this model, system busy time is not equal to service time. Service time, $\tilde{S}_n$, is the interval between a yellow circle and the end of busy time. It is important to note that busy time might include interarrival times for unsuccessful arrivals, and $\tilde{S}_n$ is the service time conditioned that service time is smaller than the last interarrival time,  $\tilde{S}_n = S | S < Y$. 

After a service is completed, there is a random nonnegative waiting time for a new successful update. Unlike the previous section, waiting time depends only on the previous arrival time, $W_n = Y_K - S_K | Y_K > S_K$ or equivalently $W_n = Y_K - \tilde{S}_n$, where $K$ is the integer random variable that describes the total number of arrivals that is needed before the next successful arrival. Although $K$ in dropping discipline did not follow a specific distribution, $K$ in preemption in service discipline is a geometric random variable. Before a successful arrival occurs with probability $p=\text{Pr}(Y>S)$, there must be $K-1$ unsuccessful arrivals, all with the same probability $1-p$.

\section{G/G/1/1 with dropping}

For G/G/1/1 with dropping discipline, average age can be written as the difference of the areas of two triangles, divided by the expected value of the effective interarrival time. From Fig.~\ref{fig/geometry-dropping}, we have
\begin{align}
\Delta_\text{GG} =& \frac{E[({G}_n+S_{n+1})^2] - E[(S_{n+1})^2]}{2E[{G}_n]}   \\
=& \frac{E[{G}^2]}{2E[{G}]} + E[S],\label{eq:age-G-dropping}
\end{align}
where $S_{n+1}$ is independent of $G_n$, and time indices are dropped. For most general arrival and service time models, it is not easy to calculate the first and second moments of effective interarrival times. In the following, we first derive an exact expression for (\ref{eq:age-G-dropping}). Then, we also derive upper bounds to (\ref{eq:age-G-dropping}) that is easier to calculate. We will introduce two upper bounds, the first of which is valid for arbitrary interarrival and service times, and the second of which is valid for DMRL interarrivals and NBUE service times.

\subsection{Arbitrary interarrival and service times}

In this section, we start with deriving an exact age expression for the dropping model.
\begin{theorem}
Consider a G/G/1/1 system with dropping discipline, where $Y_n$ are interarrival times and $S_n$ are service times. The average age of this system is given by
\begin{align}
\hspace{-12pt}\Delta_\text{GG} =& \frac{E[Y^2]}{2E[Y]} + \frac{\sum_{k=1}^\infty E[A_k\bar F_S(A_k)]}{E[K]} + E[S]  \hspace{-12pt}
\label{eq:age-exact-dropping}
\end{align}
where $K$ is the average number of arrivals during an effective interarrival time, $A_{k} = \sum_{l=1}^{k-1}Y_l$, and $\bar F_S(\cdot)$ is the complementary cdf of $S$.
\label{th:exact-dropping}
\end{theorem}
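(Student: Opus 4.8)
The plan is to evaluate the two moments $E[G]$ and $E[G^2]$ of the effective interarrival time that appear in (\ref{eq:age-G-dropping}) and then substitute. The starting point is the representation $G=\sum_{k=1}^{K}Y_k$ together with the observation that the event $\{K\ge k\}$ is exactly the event that the server is still busy at the $k$-th arrival epoch, i.e. $\{K\ge k\}=\{A_k\le S\}$, since $K\le k-1$ iff $S<A_k$ and the partial sums $A_k$ are nondecreasing. This lets me rewrite the random sum as a fixed infinite sum of indicators,
\begin{align}
G=\sum_{k=1}^{\infty}Y_k\,\mathbbm{1}[A_k\le S],
\end{align}
where each indicator depends only on $Y_1,\dots,Y_{k-1}$ and on $S$. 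Since all summands are nonnegative, Tonelli's theorem justifies every interchange of expectation and summation below.

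For the first moment I would use that $Y_k$ is independent of the pair $(A_k,S)$ --- $A_k$ involves only $Y_1,\dots,Y_{k-1}$ and $S$ is independent of the arrivals --- so that $E[Y_k\,\mathbbm{1}[A_k\le S]]=E[Y]\,\Pr(A_k\le S)$. Summing over $k$ and recognizing $\sum_{k\ge1}\Pr(A_k\le S)=\sum_{k\ge1}\Pr(K\ge k)=E[K]$ gives the Wald-type identity $E[G]=E[Y]\,E[K]$, which supplies the $E[Y^2]/(2E[Y])$ term and the $E[K]$ denominator in the claimed formula.

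The crux is the second moment. Squaring the indicator representation and using $\mathbbm{1}[A_k\le S]\,\mathbbm{1}[A_m\le S]=\mathbbm{1}[A_{\max(k,m)}\le S]$, I would split into diagonal and off-diagonal contributions. The diagonal terms $E[Y_k^2\,\mathbbm{1}[A_k\le S]]$ again factor by independence into $E[Y^2]\,\Pr(A_k\le S)$ and sum to $E[Y^2]\,E[K]$. The off-diagonal terms are the delicate part: for $k<m$ the factor $Y_k$ is buried inside $A_m$, so it cannot be pulled out directly, whereas $Y_m$ \emph{is} independent of $\mathbbm{1}[A_m\le S]$. The trick is to factor out $E[Y_m]=E[Y]$ first and then sum over $k<m$ \emph{before} taking the expectation, collapsing $\sum_{k=1}^{m-1}Y_k=A_m$, so that $\sum_{k<m}E[Y_k\,\mathbbm{1}[A_m\le S]]=E[A_m\,\mathbbm{1}[A_m\le S]]$. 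Conditioning on $A_m$, which is independent of $S$, turns this into $E[A_m\,\bar F_S(A_m)]$, and the factor $2$ from the symmetry of $k<m$ and $k>m$ yields the off-diagonal total $2E[Y]\sum_{m\ge1}E[A_m\,\bar F_S(A_m)]$.

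Putting these together gives $E[G^2]=E[K]\,E[Y^2]+2E[Y]\sum_{m\ge1}E[A_m\,\bar F_S(A_m)]$; dividing by $2E[G]=2E[Y]\,E[K]$ and adding $E[S]$ reproduces (\ref{eq:age-exact-dropping}). I expect the main obstacle to be the bookkeeping in the off-diagonal sum --- tracking which $Y$-indices are independent of the busy-period indicator and carrying out the inner summation over $k$ in the correct order so that the partial sum reconstitutes $A_m$ --- rather than any genuinely analytic difficulty.
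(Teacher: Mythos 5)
Your proposal is correct and follows essentially the same route as the paper: the same indicator decomposition of $G=\sum_k Y_k I_k$ into diagonal and off-diagonal terms, the same use of the independence of $Y_k$ from $I_k$ (and of $S$ from the arrivals) to factor out $E[Y]$ and collapse the inner sum to $A_k$, and the same evaluation of $E[A_k\,\mathbbm{1}[A_k<S]]$ as $E[A_k\bar F_S(A_k)]$ (the paper phrases this last step via Bayes' rule, which is cosmetically different but mathematically identical). The only deviations are immaterial: you prove the Wald identity directly from the indicator sum rather than citing it, and you use $\le$ where the paper uses $<$ in the stopping event, which is a measure-zero distinction for continuous service times.
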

\begin{proof}
Remember from Section~\ref{sec/Model-dropping} that effective interarrival times, $G_n$, can be written as random sums of random numbers. Although $K$ is not independent of all $Y_j$, it is possible to calculate the expected value of $G$ using Wald's equation \cite[Theorem 3.3.2]{book/Ross} as $E[G] = E[K]E[Y]$. 

Next, we derive an expression for the second moment of effective interarrival times, $E[G^2]$. Let us first define the indicator function as
\begin{align}
I_k = \left\{
\begin{array}{ll}
1 \quad & \text{if  } k \leq K \\
0 \quad & \text{if  } k > K.
\end{array}
\right.   
\label{eq:indicator}
\end{align}
Now, we have 
\begin{align}
\!\!\!\!E\left[ \!\left( \sum_{k=1}^K Y_k \right)^2\! \right] \! &=\!E\left[ \left( \sum_{k=1}^\infty Y_k I_k\right)^2 \right]    \\
\!&=\! \sum_{k=1}^\infty E[Y_k^2 I_k] \! +\! 2 \sum_{k=1}^\infty\sum_{l=1}^{k-1} E[ Y_k I_k Y_l I_l ].   \label{eq:7}
\end{align}
Note that, $I_k =1$ if and only if, we have not stopped after successively observing $Y_1, \dots, Y_{k-1}$. Therefore, $I_k$ is determined by $Y_1, \dots, Y_{k-1}$, and is thus independent of $Y_k$. We have $E[Y_k^2 I_k] = E[Y_k^2] E[I_k]$, and $E[ Y_k I_k Y_l I_l ] = E[ Y_k]E [I_k Y_l I_l ]$, for $l<k$. Now, (\ref{eq:7}) becomes
\begin{align}
E\left[ G^2\right] \!&=\! E\left[Y^2\right] \sum_{k=1}^\infty E[I_k] \! +\! 2E[Y] \sum_{k=1}^\infty\sum_{l=1}^{k-1} E[ I_k Y_l I_l ].  \label{eq:8}
\end{align}
Let us calculate 
\begin{align}
\hspace{-8pt}\sum_{l=1}^{k-1} E [I_k Y_l I_l]  &= \sum_{l=1}^{k-1}E [Y_l | I_k = 1]\text{Pr}(I_k = 1)   \\
&= E \left[\sum_{l=1}^{k-1}Y_l \, \rule[-12pt]{.45 pt}{28pt} \, I_k = 1\right]\text{Pr}(I_k = 1) \\
&= E \left[\sum_{l=1}^{k-1}Y_l \, \rule[-12pt]{.45 pt}{28pt} \, \sum_{l=1}^{k-1}Y_l \!<\! S\right]\text{Pr}\left(\sum_{l=1}^{k-1}Y_l \!<\! S\right) \label{eq:11}
\end{align} 
where we used the fact that $I_l = 1$ for $l < k$ and given $I_k = 1$. The condition $I_k=1$ and $\sum_{l=1}^{k-1}Y_l < S$ are equivalent for dropping service discipline. Next, let us denote $A_{k} = \sum_{l=1}^{k-1}Y_l$. Then using Bayes' rule, we can calculate that
\begin{align}
E \left[A_k | A_{k} < S\right] = &\int_0^\infty a \frac{\text{Pr}(A_k < S | A_k = a)}{\text{Pr}(A_k<S)}f_{A_k}(a) da \\
&= \frac{E[A_k\bar F_S(A_k)]}{\text{Pr}(A_k<S)}
\end{align}
where $\text{Pr}(A_k < S | A_k = a) = \text{Pr}(S > a) = \bar F_S(a)$. Now, (\ref{eq:8}) becomes
\begin{align}
\hspace{-10pt}E\left[ G^2\right] &= E\left[Y^2\right] E[K]  + 2E[Y] \sum_{k=1}^\infty E[A_k\bar F_S(A_k)]
\label{eq:dummy}
\end{align}
where $\sum_{k=1}^\infty E[I_k] = E[K]$. Now, we have (\ref{eq:age-exact-dropping}), when (\ref{eq:dummy}) is inserted in (\ref{eq:age-G-dropping})
\end{proof}

The exact expression in (\ref{eq:age-exact-dropping}) requires a calculation of an infinite sum. In order to reduce the complexity of calculation, we derive the following upper bound.
\begin{corollary}
Consider a G/G/1/1 system with dropping discipline, where $Y_n$ are interarrival times and $S_n$ are service times. The average age of this system is upper bounded by
\begin{align}
\hspace{-12pt}\Delta_\text{GG} \leq& \frac{E[Y^2]}{2E[Y]} + E[Y]\left(\frac{E[K^2]}{2E[K]}-\frac{1}{2}\right) + E[S]  \hspace{-12pt}
\label{eq:age-ub-dropping}
\end{align}
where $K$ is the average number of arrivals during an effective interarrival time.
\label{co:ub-dropping}
\end{corollary}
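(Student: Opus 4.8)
The plan is to start from the exact age expression (\ref{eq:age-exact-dropping}) and bound only its middle term, since the first and third terms already appear verbatim in the target inequality (\ref{eq:age-ub-dropping}). Thus it suffices to show
\begin{align}
\sum_{k=1}^\infty E[A_k \bar F_S(A_k)] \le \frac{E[Y]}{2}\bigl(E[K^2]-E[K]\bigr).
\end{align}
First I would rewrite the summand probabilistically. Since $S$ is independent of the $Y_l$, we have $\bar F_S(A_k)=\text{Pr}(A_k<S\mid A_k)$, so that $E[A_k\bar F_S(A_k)]=E[A_k I_k]$ with $I_k$ the indicator of $\{A_k<S\}$ from (\ref{eq:indicator}), exactly the equivalence already used in the proof of Theorem~\ref{th:exact-dropping}.

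Next I would reorganize the whole sum in terms of $K$. Expanding $A_k=\sum_{l=1}^{k-1}Y_l$, interchanging the order of summation, and using $\sum_{k>l} I_k = (K-l)^+$, I obtain
\begin{align}
\sum_{k=1}^\infty E[A_k I_k]=\sum_{l=1}^\infty E\bigl[Y_l\,(K-l)^+\bigr].
\end{align}
The target would then follow from the termwise estimate $E[Y_l (K-l)^+]\le E[Y]\,E[(K-l)^+]$: summing over $l$ and using $\sum_{l=1}^{K-1}(K-l)=K(K-1)/2$ gives $\sum_l E[(K-l)^+]=\tfrac12 E[K(K-1)]$, and dividing by $E[K]$ and substituting into (\ref{eq:age-exact-dropping}) reproduces (\ref{eq:age-ub-dropping}) term for term. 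The motivating observation is that the bound would in fact be an equality if the $Y_l$ were independent of $K$; the slack comes precisely from their correlation.

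The main obstacle is that termwise inequality, which is a negative-correlation statement. I would condition on $S$ and on all $Y_j$ with $j\neq l$, leaving $Y_l$ as the only remaining random variable, independent of everything in the conditioning so that $E[Y_l\mid\text{rest}]=E[Y]$. The key claim is that, with everything else fixed, $(K-l)^+$ is a \emph{non-increasing} function of $Y_l$: raising $Y_l$ raises the partial sum $A_{l+1}$, which both makes the stopping event $\{A_{l+1}\ge S\}$ more likely (killing the factor $I_{l+1}$) and lowers the residual level $S-A_{l+1}$ that the subsequent increments $Y_{l+1},Y_{l+2},\dots$ must cross, thereby reducing the first-passage count $K-l$. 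Since $Y_l\mapsto Y_l$ is increasing and $Y_l\mapsto(K-l)^+$ is decreasing, Chebyshev's association (FKG) inequality yields $E[Y_l(K-l)^+\mid\text{rest}]\le E[Y]\,E[(K-l)^+\mid\text{rest}]$, and taking expectations removes the conditioning. The delicate points to verify carefully are the monotonicity of the first-passage count in the single increment $Y_l$ and the conditioning that turns $(K-l)^+$ into a deterministic decreasing function of $Y_l$; once those are in place, summing over $l$ closes the argument.
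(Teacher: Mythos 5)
Your proof is correct, and it reaches the paper's bound by a genuinely different organization of the same double sum. The paper keeps the outer index $k$ fixed and asserts the single inequality $E[A_k \mid A_k < S] \le E[A_k]$ (its equation (\ref{eq:dummy2})), i.e., that $A_k$ and the survival indicator $I_k=\mathbbm{1}\{A_k<S\}$ are negatively correlated; this is the one-dimensional Chebyshev/association inequality applied to $A_k$ with the increasing function $x\mapsto x$ and the decreasing function $x\mapsto \bar F_S(x)$, after which $\sum_k (k-1)E[I_k]=\tfrac12 E[K(K-1)]$ finishes the computation. You instead swap the order of summation to group by the inner index $l$, obtaining $\sum_l E[Y_l (K-l)^+]$, and prove the termwise bound $E[Y_l (K-l)^+]\le E[Y]E[(K-l)^+]$ by conditioning on everything except $Y_l$ and checking that the first-passage count $(K-l)^+$ is non-increasing in $Y_l$. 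Both routes rest on the same underlying negative-correlation fact (indeed, by exchangeability of $Y_1,\dots,Y_{k-1}$ given the symmetric event $\{A_k<S\}$, both termwise claims follow from the finer statement $E[Y_l I_k]\le E[Y]E[I_k]$ for $l<k$), but your version has two advantages: it supplies an explicit justification for the key inequality, which the paper only asserts, and it makes transparent where the slack in the corollary comes from --- the negative dependence between the increments and the stopping index --- which matches the paper's remark that equality would hold if $K$ were independent of the $Y_k$ (deterministic interarrivals). The paper's version is shorter because its association inequality involves only the single scalar $A_k$ rather than a monotonicity argument for a first-passage functional. One small point of care in your write-up: the interchange of summation is justified by nonnegativity of the terms (Tonelli), which is worth stating, and the monotonicity claim should note that when $K<l$ the quantity $(K-l)^+$ is identically zero so the claim holds trivially there; you have essentially said both things, so there is no gap.
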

The proof of Corollary~\ref{co:ub-dropping} follows by noting that
\begin{align}
E \left[\sum_{l=1}^{k-1}Y_l \; \rule[-12pt]{.45 pt}{28pt} \; \sum_{l=1}^{k-1}Y_l < S\right] &\leq E \left[\sum_{l=1}^{k-1}Y_l\right],
\label{eq:dummy2}
\end{align}
and $\sum_{k=1}^\infty (k-1) E[I_k]  = \frac{1}{2}E[K(K-1)]$. In our system model, the number of terms in the random sum, $G$, depends on the summands, $Y_k$. Under this system model, the upper bound in (\ref{eq:age-ub-dropping}) can only be achieved with deterministic interarrival times, $Y_k$. On the other hand, when (\ref{eq:dummy2}) is applied to (\ref{eq:dummy}), we get 
\begin{align}
\!\!\!E[ G^2] \leq & E[Y^2] E[K]  + (E[Y])^2 (E[K^2] - E[K])\label{eq:dummy3}
\end{align}
where the right hand side is equal to the second moment of a random sum when the number of terms in the sum is independent of the summands \cite{book:Yates}.  

It is important to note that Corollary~\ref{co:ub-dropping} reduces the complexity of calculation. For exponential service times, we have a closed form expression for the upper bound to the average age. It can be shown that for exponentially distributed service times with parameter $\mu$, $K$ is a geometric random variable with $p = 1-E[e^{-\mu Y}]$. Then, the upper bound to the age of G/M/1/1 systems can be written as
\begin{align}
\hspace{-12pt}\Delta_\text{GM} \leq & \frac{E[Y^2]}{2E[Y]} + E[Y]\left(\frac{1}{1-E[e^{-\mu Y}]}- 1 \right) + \frac{1}{\mu} . \hspace{-12pt}
\label{eq:age-ub-GM-dropping}
\end{align}
Since all the components in (\ref{eq:age-ub-GM-dropping}) are known, one can use this upper bound to design age-minimal policies for communication systems with exponential service time and without a restriction on the arrival process. We remark that the resulting age of such an optimization is guaranteed to be achieved. 

When the interarrival times are also exponentially distributed with rate parameter $\lambda$, (\ref{eq:age-ub-GM-dropping}) becomes
\begin{align}
\Delta_\text{MM} \leq & \frac{1}{\lambda} + \frac{2}{\mu}
\end{align}
where the age of M/M/1/1 queues is known to be $\Delta_\text{MM} =\frac{1}{\lambda} + \frac{2}{\mu} - \frac{1}{\lambda+\mu}$ \cite{CostaEphr-TransIT}.

\subsection{DMRL interarrival and NBUE service times}

For some communication models, it might not be possible to calculate the moments of $K$ that is needed in (\ref{eq:age-ub-dropping}) for arbitrary interarrival times. For such cases, we derive another upper bound that does not include those moments. This bound requires the interarrival times to have DMRL and service times to have NBUE property \cite{book/StochasticOrders}. 

\begin{theorem}
Consider a G/G/1/1 system with dropping discipline, where $Y_n$ are i.i.d. with arbitrary distribution that has DMRL property, and $S_n$ are i.i.d. with arbitrary distribution that has NBUE property. Let us also consider an M/G/1/1 system that is formed by replacing the interarrival times of G/G/1/1 system with exponentially distributed interarrival times, $Y^e_n$, where $E[Y^e]=E[Y]$. Then, the average age of G/G/1/1 system, $\Delta_\text{GG}$ is upper bounded by the average age of M/G/1/1 system, $\Delta_\text{MG}$. 
\label{th:mg-dropping}
\end{theorem}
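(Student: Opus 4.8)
The plan is to reduce the two-system comparison to a single scalar inequality about the effective interarrival time. Both systems use the same service distribution, so writing $\Delta_\text{GG} = \frac{E[G^2]}{2E[G]} + E[S]$ and $\Delta_\text{MG} = \frac{E[G_e^2]}{2E[G_e]} + E[S]$ (the latter is also a dropping system, now with exponential interarrivals $Y^e$, so the same age formula applies), the common $E[S]$ cancels and it suffices to prove $\frac{E[G^2]}{2E[G]} \le \frac{E[G_e^2]}{2E[G_e]}$. I would then recognize $\frac{E[G^2]}{2E[G]}$ as the mean $E[G_{eq}]$ of the equilibrium (stationary-excess) distribution of $G$, whose survival function is $\bar F_{G_{eq}}(t) = \frac{1}{E[G]}\int_t^\infty \bar F_G(u)\,du$. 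The target thus becomes $E[G_{eq}] \le E[G_{e,eq}]$, for which it is enough to establish the stochastic dominance $G_{eq} \le_{st} G_{e,eq}$, i.e. the harmonic-mean-residual-life ordering $G \le_{hmrl} G_e$.

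Next I would translate the two aging hypotheses into stochastic orders. The DMRL property of $Y$ says its mean residual life $m_Y(t)$ is nonincreasing; since $m_{Y^e}(t) \equiv E[Y^e] = E[Y]$ for the exponential comparison variable, this is precisely $m_Y(t)\le m_{Y^e}(t)$, i.e. $Y \le_{mrl} Y^e$. As a by-product the coefficient of variation of $Y$ is at most one, so $\frac{E[Y^2]}{2E[Y]} \le E[Y] = \frac{E[(Y^e)^2]}{2E[Y^e]}$, which already orders the first terms of the exact expression (\ref{eq:age-exact-dropping}). The NBUE property of $S$ I would record as $S_{eq} \le_{st} S$, equivalently $\int_t^\infty \bar F_S(u)\,du \le E[S]\,\bar F_S(t)$ for all $t\ge 0$.

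The core step is to propagate $Y \le_{mrl} Y^e$ through the stopped-sum construction. Writing $G = S + R(S)$, where $R(S)$ is the residual life of the $Y$-renewal process inspected at the independent time $S$, memorylessness gives $G_e = S + R_e$ with $R_e$ an independent exponential of mean $E[Y]$. Conditioning on $S=s$ yields the explicit tail $\bar F_G(t) = \bar F_S(t) + \int_0^t \text{Pr}(R(s) > t-s)\, f_S(s)\,ds$, and the same formula for $G_e$ with $\text{Pr}(R(s) > t-s)$ replaced by $e^{-(t-s)/E[Y]}$, so everything comes down to comparing the inspected residual life of a DMRL renewal process against the exponential residual, then integrating tails and normalizing by the means to pass to the equilibrium distributions. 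I expect this to be the main obstacle: the inspection epoch is a genuinely random, service-dependent quantity, the order does not transfer summand-by-summand, and the number of terms $K$ (hence the normalizer $E[G]=E[K]E[Y]$) changes with the interarrival law. This last point is also why bounding the exact series (\ref{eq:age-exact-dropping}) term by term does not obviously work. The role of the NBUE service time is exactly to control the inspection: it should guarantee that inspecting at $S$ cannot overturn the $\le_{mrl}$ advantage of the DMRL interarrivals, making the exponential interarrival case extremal. I would discharge this step by establishing the residual-life comparison that DMRL provides and invoking the relevant preservation properties of the $\le_{mrl}$ and $\le_{hmrl}$ orders under random stopping at an NBUE-independent time from \cite{book/StochasticOrders}.

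Assembling the pieces gives $G \le_{hmrl} G_e$, hence $E[G_{eq}] \le E[G_{e,eq}]$, i.e. $\frac{E[G^2]}{2E[G]} \le \frac{E[G_e^2]}{2E[G_e]}$; restoring the common $E[S]$ yields $\Delta_\text{GG} \le \Delta_\text{MG}$. As a consistency check I would specialize to exponential service, where $\Delta_\text{MG}$ collapses to the known M/M/1/1 age and the inequality must hold with strict DMRL slack whenever $Y$ is non-exponential, confirming the direction of the bound.
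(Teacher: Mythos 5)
Your proposal is correct and follows essentially the same route as the paper: you reduce the claim to an hmrl-type comparison of the effective interarrival times, use the same decomposition $G = S + W$ with $W$ the residual life of the $Y$-renewal process inspected at the independent service completion epoch, and defer the key step (that DMRL interarrivals make this inspected residual dominated by the exponential case, and that the order is preserved under independent convolution) to the stochastic-orders references, exactly as the paper does via the chain $W\leq_\text{hmrl}Y\leq_\text{hmrl}Y^e$ and \cite[Lemma 2.B.5 and eqn.~(2.B.5)]{book/StochasticOrders}. Your equilibrium-distribution phrasing $E[G_{eq}]\leq E[G_{e,eq}]$ is just the definitional restatement of that same $\leq_\text{hmrl}$ conclusion.
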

\begin{proof}
In order to prove this theorem, we need several results from stochastic ordering \cite{book/StochasticOrders}. Here, we will only provide the outline of the proof due to space limitations. Since $D_n$ is equally distributed with $G_n$, and ${D}_{n+1} = W_n + S_{n+1}$, (\ref{eq:age-G-dropping}) can be written as
\begin{align}
\Delta_\text{GG} =& \frac{E[(W_n+S_{n+1})^2]}{2E[(W_n+S_{n+1})]} + E[S].
\label{eq:age-dropping-w+s}
\end{align}
Our goal is to show that the expression in (\ref{eq:age-dropping-w+s}) is less than equal to
\begin{align}
\Delta_\text{MG} =&  \frac{E\left[ (Y^e+S)^2\right]}{2E[(Y^e+S)]} + E[S]
\label{eq:age-ub-MG-dropping}
\end{align}
where the right hand side is the age for M/G/1/1 queues \cite{YatesISIT2017}. We first argue that interarrival time random variable, $Y$, with DMRL property for the G/G/1/1 system is smaller (in the convex order) than the exponential random variable with the same mean, $Y^e$, for the M/G/1/1 system. Therefore, we have $Y\leq_\text{cx}Y^e$. 

We can write the waiting time as $W=(Y_{K} - X | Y_{K} > X)$, where $X$ is the remaining service after the last unsuccessful arrival. Note that $X$ depends on $S$ and $Y_{j}, j=1,\dots, K-1$, but it is independent of $Y_{K}$. Using \cite[Section 11-4, eqn. (50)]{book/Wolff} for DMRL interarrival times, we have $W\leq_\text{hmrl}Y\leq_\text{hmrl} Y^e$. Moreoever, since $S_{n+1}$ is independent of $W_n$ and $Y^e_n$, from \cite[Lemma 2.B.5]{book/StochasticOrders}, we have $W+S\leq_\text{hmrl} Y^e+S$. Finally, using \cite[eqn. (2.B.5)]{book/StochasticOrders}, we have
\begin{align}
\frac{E[(W+S)^2]}{E[W+S]} \leq \frac{E[(Y^e+S)^2]}{E[Y^e+S]},
\end{align}
which directly implies $\Delta_\text{GG} \leq \Delta_{MG}$.
\end{proof}
\begin{figure}
	\begin{subfigure}[t]{0.5\textwidth}
		\centering
		\includegraphics[width=3.3in]{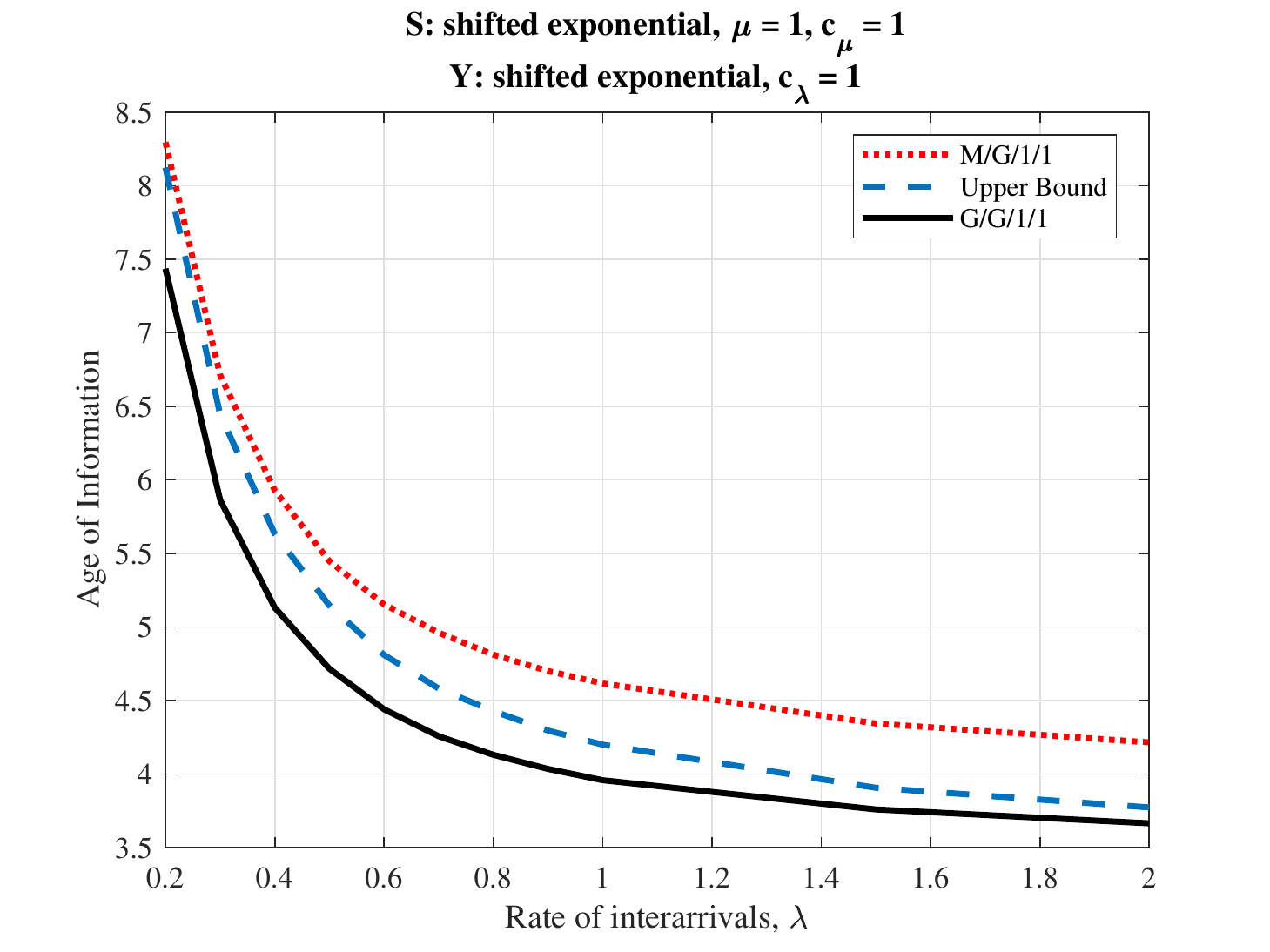}
		\vspace{-5pt}
\caption{}
	\end{subfigure}
    \begin{subfigure}[t]{0.5\textwidth}
		\centering
		\includegraphics[width=3.3in]{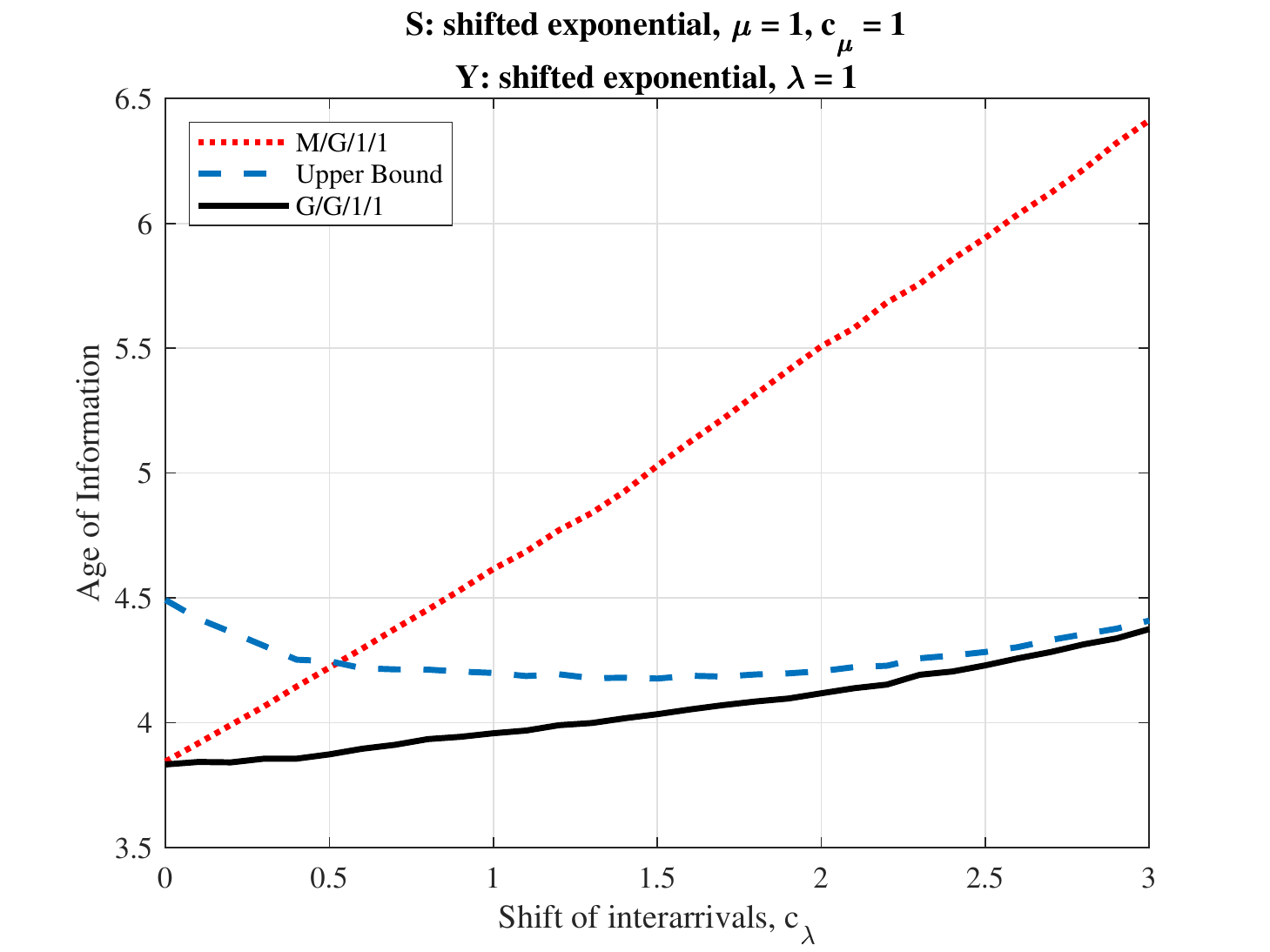}
		\vspace{-5pt}
\caption{}
	\end{subfigure}
	\caption{Average AoI for G/G/1/1 with dropping.}\label{fig/SE}
\end{figure}

In order to observe the tightness of our upper bounds, we simulate an example G/G/1/1 system, calculate its age and compare it to the proposed upper bounds. As in \cite{Yates-multicast}, we consider shifted exponential interarrival times with rate parameter $\lambda$ and shift parameter $c_\lambda$; and shifted exponential service times with rate parameter $\mu$ and shift parameter $c_\mu$. In Fig.~\ref{fig/SE}, we observe that age decreases with the rate parameter and increases with the shift parameter of the interarrival distribution. The distance between the first proposed upper bound and the exact age for G/G/1/1 seems to be bounded and small for both curves. When the shift of interarrival times is zero, G/G/1/1 reduces to M/G/1/1. As the shift of interarrival times increase, upper bound converges to the exact age for G/G/1/1. The reason for this is that as $c_\lambda$ increases, coefficient of variation of the interarrival times decreases, and therefore interarrival times appraoches to a deterministic value.

In Fig.~\ref{fig/imrl}, we consider the effect of having Increasing Mean Residual Life (IMRL) interarrival times. We observe that while the expression in Theorem~\ref{co:ub-dropping} is still a valid upper bound, the age for M/G/1 systems becomes a lower bound to age for G/G/1 systems. In fact, this can be easily proved by slightly modifying the proof of Theorem~\ref{th:mg-dropping}.

\begin{figure}
	\centering
	\includegraphics[width=3.3in]{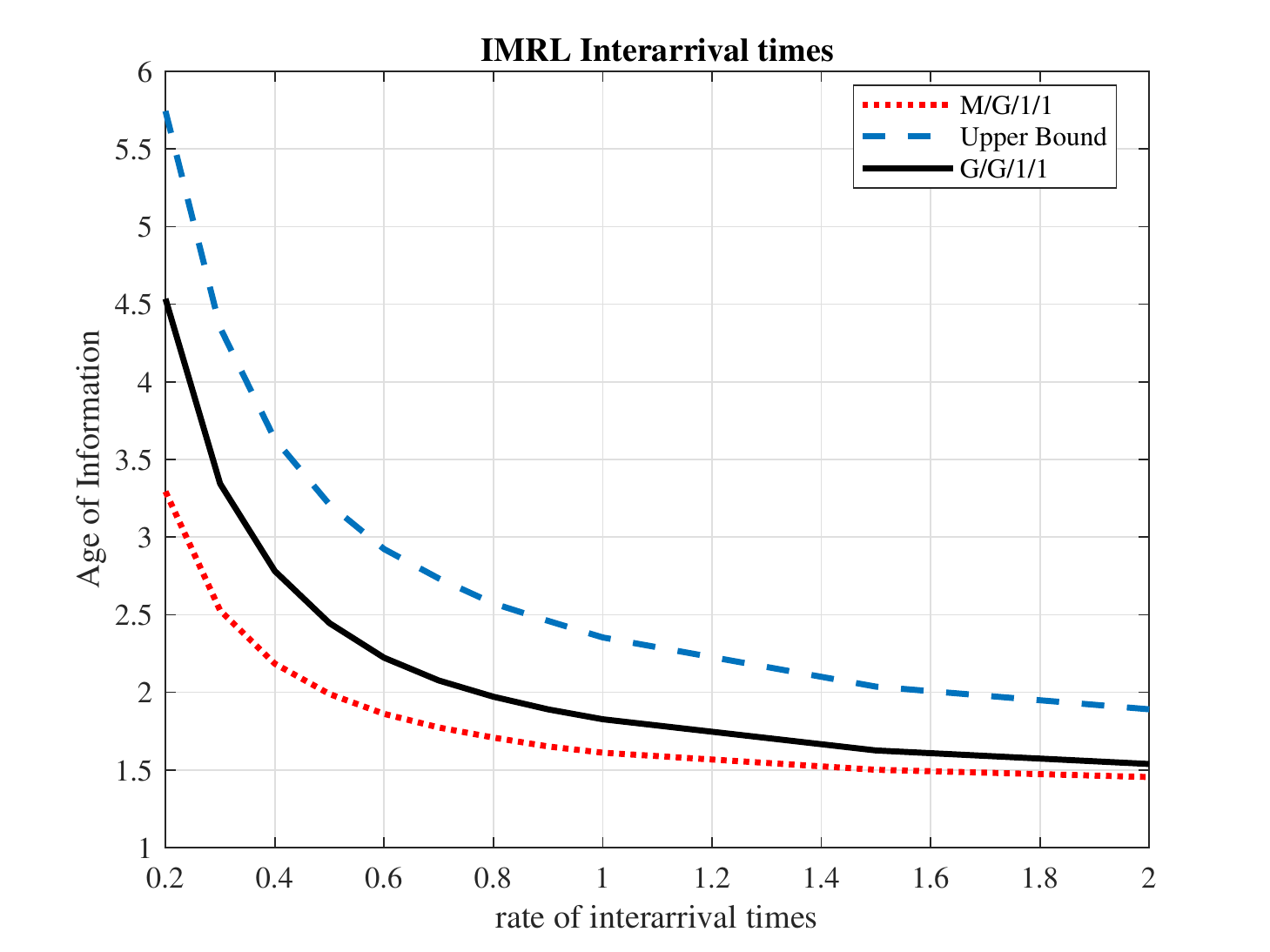}
	\caption{IMRL interarrival times for G/G/1/1 with dropping.}\label{fig/imrl}
\end{figure}
\section{G/G/1/1 with preemption in service}

For preemption in service discipline, average age can also be written as the difference of the areas of two triangles, divided by the expected value of the effective interarrival time. From Fig.~\ref{fig/geometry-preemption}, we have
\begin{align}
\Delta_\text{GG} =& \frac{E[({R}_n+\tilde{S}_{n+1})^2] - E[(\tilde{S}_{n+1})^2]}{2E[{R}_n]}   \\
=& \frac{E[{R}^2]}{2E[{R}]} + E[\tilde{S}]
\label{eq:age-preemption}
\end{align}
where $\tilde{S}_{n+1} = \{S|S<Y\}_{n+1}$ is independent of $R_n$, and time indices are dropped. We note that $R_{n}$ and $G_n$ are equally distributed, then, (\ref{eq:age-preemption}) becomes
\begin{align}
\Delta_\text{GG} = \frac{E[{G}^2]}{2E[{G}]} + E[\tilde{S}].
\label{eq:age-preemption2}
\end{align}
It is important to note that the random variable $G$ in this model is not the same $G$ as in dropping model. The difference can be observed from Figs.~\ref{fig/geometry-dropping} and \ref{fig/geometry-preemption} by noting the change in scale for $S_n$. In the following, we derive an exact closed form expression for (\ref{eq:age-preemption2}), which is valid for arbitrary interarrival and service times. Unlike the expression in dropping model, the expression here does not require the calculation of the average number of arrivals between two consecutive effective interarrivals.

\begin{theorem}
Consider a G/G/1/1 system with preemption in service, where $Y_n$ are interarrival times and $S_n$ are service times. The average age of this system is given by
\begin{align}
\Delta_\text{GG} =& \frac{E[Y^2]}{2E[Y]} + \frac{E[Y \bar F_S(Y)]}{E[\bar F_S(Y)]} + E[\tilde{S}]  
\label{eq:age-exact-preemption}
\end{align}
where $\bar F_S(\cdot)$ is the complementary cdf of $S$, and $\tilde{S}\!= \!S|S<Y$.
\label{th:exact-preemption}
\end{theorem}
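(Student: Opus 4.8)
The plan is to obtain (\ref{eq:age-exact-preemption}) from (\ref{eq:age-preemption2}) by computing the first two moments of the effective interarrival time $G=\sum_{k=1}^{K}Y_k$, where, as noted in the model section, $K$ is geometric with parameter $p=\text{Pr}(Y>S)$. The feature that makes this discipline more tractable than the dropping model of Theorem~\ref{th:exact-dropping} is that every preemption test pits a \emph{fresh, independent} service time against an interarrival time, so the events ``arrival $j$ is preempted'' are i.i.d.\ with probability $1-p$. I would therefore reuse the indicator device $G=\sum_{k=1}^{\infty}Y_kI_k$ of that proof, where $I_k$ indicates that arrivals $1,\dots,k-1$ were all preempted. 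Since $I_k$ is determined by the first $k-1$ service/interarrival pairs, it is independent of $Y_k$ (precisely as in (\ref{eq:7})--(\ref{eq:8})), and now it factors as a product of $k-1$ i.i.d.\ indicators, giving the clean identity $E[I_k]=(1-p)^{k-1}$.

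The first moment then follows by a Wald-type collapse: $E[G]=\sum_{k\ge1}E[Y_k]E[I_k]=E[Y]\sum_{k\ge1}(1-p)^{k-1}=E[Y]/p$, which fixes the denominator in (\ref{eq:age-preemption2}). For the second moment I would expand exactly as in (\ref{eq:7})--(\ref{eq:8}). The diagonal part is immediate, $\sum_k E[Y_k^2I_k]=E[Y^2]\sum_k(1-p)^{k-1}=E[Y^2]/p$. In each off-diagonal term $Y_k$ is independent of the remaining factors and $I_kI_l=I_k$, so $E[Y_kI_kY_lI_l]=E[Y]\,E[I_kY_l]$; this is where the independence of the per-arrival services is decisive, because $I_k$ splits into the single indicator attached to index $l$, which couples with $Y_l$ to produce $E[Y_l\,\mathbbm{1}(Y_l<S)]=E[Y\bar F_S(Y)]$, times $k-2$ further indicators each contributing a factor $1-p$. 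Summing the geometric series first over $l<k$ and then over $k$ leaves the \emph{closed form} $E[G^2]=E[Y^2]/p+2E[Y]\,E[Y\bar F_S(Y)]/p^{2}$, with no surviving infinite sum---in contrast with the residual series in (\ref{eq:age-exact-dropping}), which persists there precisely because a single service time governs all the waiting and blocks this factorization.

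I expect the one genuinely delicate point to be the bookkeeping in this off-diagonal computation: one must keep in mind that $K$ is correlated with the summands, that exactly one summand is drawn from $Y\mid Y>S$ while the others are drawn from $Y\mid Y<S$, and that it is the coupled index $l$ that converts $E[I_kY_l]$ into the weight $E[Y\bar F_S(Y)]$. Everything else is geometric-series arithmetic rather than estimation.

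Finally, forming $E[G^2]/(2E[G])$ cancels the common factor $E[Y]/p$ and, after adding $E[\tilde S]$, gives $\Delta_\text{GG}=\frac{E[Y^2]}{2E[Y]}+\frac{E[Y\bar F_S(Y)]}{p}+E[\tilde S]$; the middle term is thus $E[Y\bar F_S(Y)]$ divided by the service-completion probability $p=\text{Pr}(Y>S)$, which in complementary-cdf form reads $1-E[\bar F_S(Y)]$. As a sanity check I would specialize to $Y\sim\text{Exp}(\lambda)$, $S\sim\text{Exp}(\mu)$, where the three terms evaluate to $\tfrac{1}{\lambda}+\tfrac{\lambda}{\mu(\lambda+\mu)}+\tfrac{1}{\lambda+\mu}=\tfrac{1}{\lambda}+\tfrac{1}{\mu}$, recovering the known age of the M/M/1/1 system with preemption.
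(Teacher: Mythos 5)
Your derivation follows essentially the same route as the paper's: the indicator expansion $G=\sum_k Y_k I_k$, the independence of $I_k$ from $Y_k$, Wald's equation for $E[G]$, and the geometric structure of $K$ with success probability $p=\text{Pr}(Y>S)$. Your only real departure is that you use $E[I_k]=(1-p)^{k-1}$ to sum the series in closed form directly, whereas the paper keeps the moments of $K$ symbolic until the last line; the two are equivalent. The computation itself is correct. However, you should notice that what you have actually proved is
\begin{align}
\Delta_\text{GG} = \frac{E[Y^2]}{2E[Y]} + \frac{E[Y\bar F_S(Y)]}{1-E[\bar F_S(Y)]} + E[\tilde S],
\end{align}
whose middle term has denominator $\text{Pr}(Y>S)=1-E[\bar F_S(Y)]$, while the statement of Theorem~\ref{th:exact-preemption} prints the denominator as $E[\bar F_S(Y)]=\text{Pr}(Y<S)$. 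These are different quantities, and your own M/M/1/1 sanity check settles which is right: your formula yields $\frac{1}{\lambda}+\frac{1}{\mu}$, the known age of the preemptive M/M/1/1 system, whereas the printed formula would yield $\frac{1}{\lambda}+\frac{2}{\lambda+\mu}$. The paper's proof carries the same internal tension: it identifies $1-p=\text{Pr}(Y<S)=E[\bar F_S(Y)]$ in the Bayes'-rule step, which forces the middle term to be $E[Y\bar F_S(Y)]/p$ with $p=\text{Pr}(Y>S)$, but then asserts ``$K$ is geometric with $p=E[\bar F_S(Y)]$'' in its final sentence in order to land on the printed expression. Your argument is sound, but you present it as if it matches the stated theorem; you should instead flag explicitly that it establishes the corrected version of the identity rather than the statement as written.
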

\begin{proof}
We know from our system model that $G=\sum_{k=1}^K Y_k$ is a random sum of random numbers, where $K$ is a geometric random variable. From Wald's equation \cite[Theorem 3.3.2]{book/Ross}, we have $E[G] = E[K]E[Y]$. Next, we derive an expression for the second moment of effective interarrival times, $E[G^2]$. Let us first use the indicator function in (\ref{eq:indicator}) and the expansion of $E[G^2]$ in (\ref{eq:7}). Similar to dropping in service discipline, $I_k$ is independent of $Y_k$. Let us consider
\begin{align}
\sum_{l=1}^{k-1} E [I_k Y_l I_l] \! &= \sum_{l=1}^{k-1} \!E [Y_l | I_k = 1]\text{Pr}(I_k=1) \\
&= (k-1) E [Y | Y < S] E[I_k]
\end{align} 
where we used the fact that conditions $I_k = 1$ and $Y_l < S$ are equivalent for preemption in service discipline and for $l <k$. It can be shown that $\sum_{k=1}^\infty E[I_k] = E[K]$ and $\sum_{k=1}^\infty E[I_k] \sum_{l=1}^{k-1}   E[I_l] \leq \frac{1}{2}E[K(K-1)]$. We have
\begin{align}
\hspace{-6pt}E[ G^2] = & E[Y^2] E[K]  + E[Y] E[Y|Y<S] E[K(K-1)]. \label{eq:dummy-preemptive}
\end{align}
Next, using Bayes' rule, we can calculate that 
\begin{align}
E [Y | Y < S] &= \int_0^\infty y \frac{\text{Pr}(Y < S | Y = y)}{\text{Pr}(Y<S)}f_Y(y)dy \\
&= \frac{E[Y\bar F_S(Y)]}{1-p}.
\end{align}
where, $\text{Pr}(Y < S | Y = y) = \text{Pr}(S >y) = \bar F_S(y)$. Now, the average age can be written as
\begin{align}
\Delta_\text{GG} = \frac{E[Y^2]}{2E[Y]} + E[Y\bar F_S(Y)]\frac{E[K(K-1)]}{2E[K](1-p)} + E[\tilde{S}].
\end{align}
Since $K$ is geometric with $p= E[\bar F_S(Y)]$, we have (\ref{eq:age-exact-preemption}).
\end{proof}

An easier to calculate upper bound is given in the following corollary. 
\begin{corollary}
Consider a G/G/1/1 system with preemption in service, where $Y_n$ are interarrival times and $S_n$ are service times. The average age of this system is always upper bounded by
\begin{align}
\Delta_\text{GG} \leq& \frac{E[Y^2]}{2E[Y]} + \frac{E[Y] (1-E[\bar F_S(Y))]}{E[\bar F_S(Y)]} + E[\tilde{S}]  
\label{eq:age-ub-preemption}
\end{align}
where $\bar F_S(\cdot)$ is the complementary cdf of $S$, and $\tilde{S} \!=\! S|S<Y$.
\label{co:ub-preemption}
\end{corollary}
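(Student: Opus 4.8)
The plan is to obtain (\ref{eq:age-ub-preemption}) directly from the exact analysis of Theorem~\ref{th:exact-preemption}, in exact parallel with the way Corollary~\ref{co:ub-dropping} is obtained from the dropping analysis: replace the conditional interarrival mean $E[Y\mid Y<S]$ by the unconditional mean $E[Y]$. I would start from the expression for the second moment of the effective interarrival time already derived in (\ref{eq:dummy-preemptive}), namely $E[G^2]=E[Y^2]E[K]+E[Y]\,E[Y\mid Y<S]\,E[K(K-1)]$, recalling that the age equals $E[G^2]/(2E[G])+E[\tilde{S}]$ with $E[G]=E[K]E[Y]$ by Wald's equation.

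The single nontrivial ingredient is the inequality $E[Y\mid Y<S]\le E[Y]$; everything after it is substitution. I would prove it by writing $E[Y\mid Y<S]=E[Y\bar F_S(Y)]/E[\bar F_S(Y)]$ and showing $E[Y\bar F_S(Y)]\le E[Y]\,E[\bar F_S(Y)]$. This is a correlation (Chebyshev-type) inequality: the map $y\mapsto y$ is nondecreasing while $y\mapsto\bar F_S(y)$ is nonincreasing, so these two functions of the common random variable $Y$ are negatively correlated and $\text{Cov}(Y,\bar F_S(Y))\le 0$. This is the single-summand analogue of the bound (\ref{eq:dummy2}) used in the dropping case, and it captures the intuition that conditioning on the ``small'' event $\{Y<S\}$ can only lower the mean of $Y$.

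Substituting $E[Y\mid Y<S]\le E[Y]$ into (\ref{eq:dummy-preemptive}) gives $E[G^2]\le E[Y^2]E[K]+(E[Y])^2E[K(K-1)]$, which is exactly the second moment of a random sum whose number of terms is independent of the summands, as in (\ref{eq:dummy3}). I would then divide by $2E[G]=2E[K]E[Y]$, use that $K$ is geometric to evaluate $E[K(K-1)]/(2E[K])$ in terms of $p=E[\bar F_S(Y)]$, and add $E[\tilde{S}]$ to reach (\ref{eq:age-ub-preemption}).

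The main obstacle is the justification of $E[Y\mid Y<S]\le E[Y]$: it is visually obvious but does require an argument, and the cleanest route is the covariance/association statement above rather than a brute-force integration. A secondary point requiring care is the bookkeeping of the geometric moments of $K$, because the per-arrival success probability and the quantity $E[\bar F_S(Y)]=\text{Pr}(Y<S)$ are complementary, so it is easy to interchange $p$ and $1-p$ when simplifying $E[K(K-1)]/(2E[K])$.
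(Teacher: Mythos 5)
Your proof takes essentially the same route as the paper, which obtains (\ref{eq:age-ub-preemption}) by substituting $E[Y\mid Y<S]\le E[Y]$ into the exact analysis of Theorem~\ref{th:exact-preemption} (mirroring how Corollary~\ref{co:ub-dropping} follows from (\ref{eq:dummy2})). Your only addition is the covariance/Chebyshev justification of $E[Y\bar F_S(Y)]\le E[Y]E[\bar F_S(Y)]$, which the paper merely asserts; that argument is correct and a welcome supplement.
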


The proof of Corollary~\ref{co:ub-preemption} follows by noting that $E[Y|Y<S] \leq E[Y]$. The upper bound in (\ref{eq:age-ub-preemption}) is achieved when $K$ is independent of $Y_k$. An example of this is the multicast model in \cite{Yates-multicast}, where random sum parameter $K$ is independent of $Y_k$. 

In order to observe the tightness of the bound in Corollary~\ref{co:ub-preemption} for a general case, we simulate the same G/G/1/1 system as in the case with dropping discipline, calculate its age using Theorem~\ref{th:exact-preemption} and compare it to the upper bound in Corollary~\ref{co:ub-preemption}. In Fig.~\ref{fig/SE_pre}, we observe that the difference between the exact age and the upper bound is bounded and small. We also observe that the difference between the exact age and the upper bound depends on the interarrival and service time distributions. The upper bound is tighter for uniform interarrival time and Rayleigh service time distributions than it is for shifted exponential interarrival and service times. 

In addition, age curve with respect to the rate parameter of the interarrival times is not monotonic. When $\lambda$ is very large, in other words when the interarrivals are too frequent, preemption starts to overload the system. Time duration between two successive successful interarrivals gets larger, and hence age increases. This observation for G/G/1/1 systems with preemption in service differs significantly from M/M/1/1 systems with preemption in service, where age is monotonically decreasing in $\lambda$ \cite{YatesKaul-ArXiv}. In addition, minimum age for G/G/1/1 systems over the rate parameter is smaller in dropping scenario than it is in preemption in service scenario. However, we know from \cite{YatesKaul-ArXiv} and \cite{CostaEphr-TransIT} that the opposite is true for M/M/1/1 systems. These observations reassure our initial motivation to consider the AoI for G/G/1 systems, as they can behave much more differently than M/M/1 systems. 

\begin{figure}
	\begin{subfigure}[t]{0.5\textwidth}
		\centering
		\includegraphics[width=3.3in]{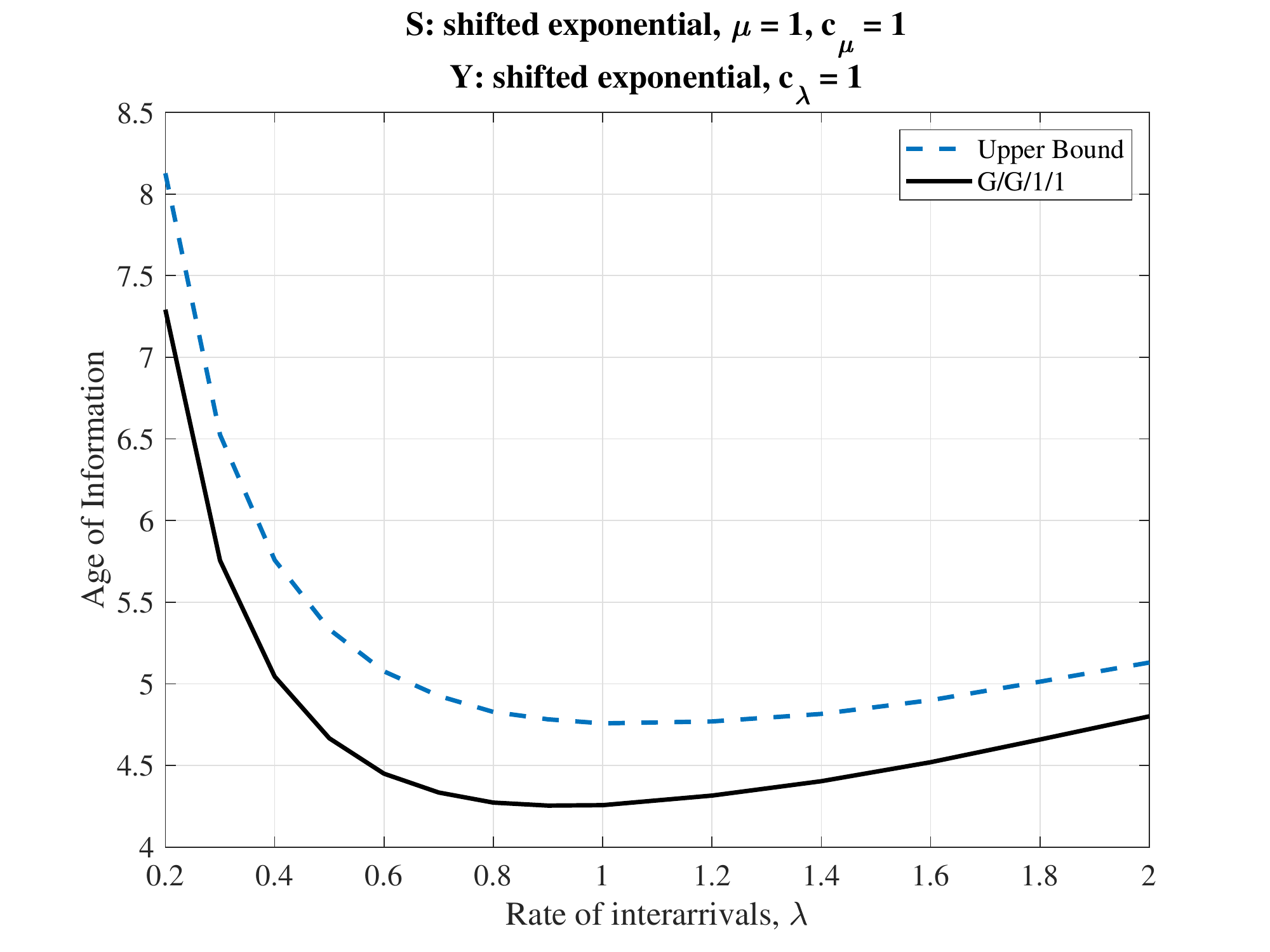}
\vspace{-5pt}
		\caption{}
	\end{subfigure}
    \begin{subfigure}[t]{0.5\textwidth}
		\centering
		\includegraphics[width=3.3in]{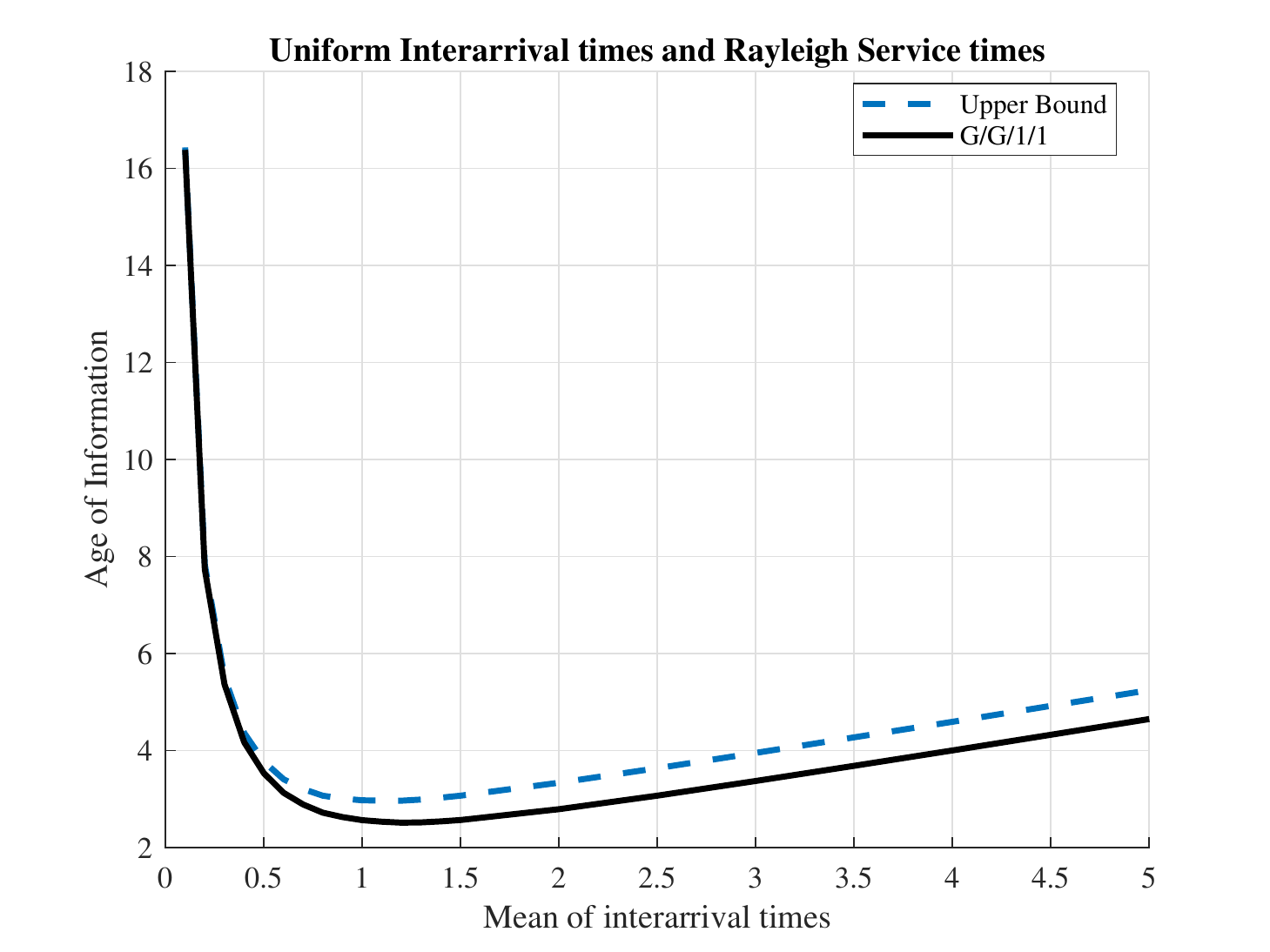}
\vspace{-5pt}
		\caption{}		
	\end{subfigure}
	\caption{Average AoI for G/G/1/1 with preemption in service.}\label{fig/SE_pre}
\end{figure}

\section{Conclusions}
Most real world applications require non-exponential interarrival and service time distributions. This paper is an attempt to extend AoI approach to more practical communication scenarios. We derived exact expressions for and upper bounds to AoI for two service disciplines. We observed that the upper bounds are in general close to exact average age. Designing general communication systems with respect to these upper bounds will result in achievable age values.

\end{document}